\theoremstyle{plain}
\newtheorem{mythe}{Theorem}
\theoremstyle{remark}
\theoremstyle{plain}
\theoremstyle{remark}
\theoremstyle{plain}
\theoremstyle{remark}
\theoremstyle{remark}
\theoremstyle{remark}
\theoremstyle{remark}
\theoremstyle{remark}
\theoremstyle{remark}
\def\UrlSpecials{\do\~{\kern -.15em\lower .7ex\hbox{~}\kern .04em}} \catcode`~=13
\newcommand{\calC}{\mathcal{C}}
\newcommand{\calN}{\mathcal{N}}
\newcommand{\bp}{\mathbf{p}}
\newcommand{\bP}{\mathbf{P}}
\newcommand{\bbE}{\mathbb{E}}
\DeclareMathAlphabet{\mathbsf}{OT1}{cmss}{bx}{n}
\DeclareMathAlphabet{\mathssf}{OT1}{cmss}{m}{sl}
\DeclareSymbolFont{bsfletters}{OT1}{cmss}{bx}{n}
\DeclareSymbolFont{ssfletters}{OT1}{cmss}{m}{n}
\DeclareMathSymbol{\bsfGamma}{0}{bsfletters}{'000}
\DeclareMathSymbol{\ssfGamma}{0}{ssfletters}{'000}
\DeclareMathSymbol{\bsfDelta}{0}{bsfletters}{'001}
\DeclareMathSymbol{\ssfDelta}{0}{ssfletters}{'001}
\DeclareMathSymbol{\bsfTheta}{0}{bsfletters}{'002}
\DeclareMathSymbol{\ssfTheta}{0}{ssfletters}{'002}
\DeclareMathSymbol{\bsfLambda}{0}{bsfletters}{'003}
\DeclareMathSymbol{\ssfLambda}{0}{ssfletters}{'003}
\DeclareMathSymbol{\bsfXi}{0}{bsfletters}{'004}
\DeclareMathSymbol{\ssfXi}{0}{ssfletters}{'004}
\DeclareMathSymbol{\bsfPi}{0}{bsfletters}{'005}
\DeclareMathSymbol{\ssfPi}{0}{ssfletters}{'005}
\DeclareMathSymbol{\bsfSigma}{0}{bsfletters}{'006}
\DeclareMathSymbol{\ssfSigma}{0}{ssfletters}{'006}
\DeclareMathSymbol{\bsfUpsilon}{0}{bsfletters}{'007}
\DeclareMathSymbol{\ssfUpsilon}{0}{ssfletters}{'007}
\DeclareMathSymbol{\bsfPhi}{0}{bsfletters}{'010}
\DeclareMathSymbol{\ssfPhi}{0}{ssfletters}{'010}
\DeclareMathSymbol{\bsfPsi}{0}{bsfletters}{'011}
\DeclareMathSymbol{\ssfPsi}{0}{ssfletters}{'011}
\DeclareMathSymbol{\bsfOmega}{0}{bsfletters}{'012}
\DeclareMathSymbol{\ssfOmega}{0}{ssfletters}{'012}
\newcommand{\tilR}{\widetilde{R}}
\newcommand{\tilW}{\widetilde{W}}
\newcommand{\hatX}{\widehat{X}}
\newcommand{\barP}{\bar{P}}
\newcommand{\balpha}{\bm{\alpha}}
\newcommand{\btau}{\bm{\tau}}
\def\norm#1{\left\| #1 \right\|}
\def\norm2#1{\left\| #1 \right\|_2}
\def\norm22#1{\left\| #1 \right\|_2^2}
\newcommand{\eqa}{\stackrel{(a)}{=}}
\newcommand{\leb}{\stackrel{(b)}{\le}}
\newcommand{\lec}{\stackrel{(c)}{\le}}
\newcommand{\qednew}{\nobreak \ifvmode \relax \else
      \ifdim\lastskip<1.5em \hskip-\lastskip
      \hskip1.5em plus0em minus0.5em \fi \nobreak
      \vrule height0.75em width0.5em depth0.25em\fi}
\title{Optimal Resource Allocation in Full-Duplex Ambient Backscatter Communication Networks for Green IoT}
\author{Gang~Yang, \emph{Member, IEEE}, Dongdong Yuan, and Ying-Chang~Liang, \emph{Fellow, IEEE}\\
University of Electronic Science and Technology of China, Chengdu, P. R. China \\
Email: yanggang@uestc.edu.cn, 201721260420@std.uestc.edu.cn, liangyc@ieee.org}
\begin{document}
 \maketitle

\begin{abstract}
Ambient backscatter communication (AmBC) enables wireless-powered backscatter devices (BDs) to transmit information over ambient radio-frequency (RF) carriers without using an RF transmitter, and thus has emerged as a promising technology for green Internet-of-Things. This paper considers an AmBC network in which a full-duplex access point (FAP) simultaneously transmits downlink orthogonal frequency division multiplexing (OFDM) signals to its legacy user (LU) and receives uplink signals backscattered from multiple BDs in a time-division-multiple-access manner. To enhance the system performance from multiple design dimensions and ensure fairness, we maximize the minimum throughput among all BDs by jointly optimizing the BDs' backscatter time portions, the BDs' power reflection coefficients, and the FAP's subcarrier power allocation, subject to the LU's throughput constraint, the BDs' harvested-energy constraints, and other practical constraints. As such, we propose an efficient iterative algorithm for solving the formulated non-convex problem by leveraging the block coordinated decent and successive convex optimization techniques. We further show the convergence of the proposed algorithm, and analyze its complexity. Finally, extensive simulation results show that the proposed joint design achieves significant throughput gains as compared to the benchmark scheme with equal resource allocation.
\end{abstract}

\vspace{-0.2cm}
\section{Introduction}
Internet of Things (IoT) is a key application scenario for the fifth generation (5G) and future mobile communication systems, and various IoT devices typically have strict limitations on energy, cost, and complexity. Recently, ambient backscatter communication (AmBC) which enables backscatter devices (BDs) to modulate their information symbols over the ambient RF carriers (e.g., WiFi, TV, or cellular signals) without using a complex and power-hungry RF transmitter~\cite{ABCSigcom13}, has emerged as a promising technology for energy-efficient and cost-efficient IoT communications.

The existing AmBC systems can be divided into three categories, the standard AmBC system with separated backscatter receiver and ambient transmitter (its legacy receiver) \cite{WiFiBackscatter14,YangLiangZhangPeiTCOM17,WangGaoAmBCTCOM16,FSBackscatterSigcomm16, WangSmithFMBackscatter17}, the cooperative AmBC system with co-located backscatter receiver and  legacy receiver \cite{YangLiangZhangIoTJ18,LongYangGC17}, and the full-duplex AmBC system with co-located backscatter receiver and ambient transmitter \cite{BackFiSigcom15}. Existing works on AmBC focus on the transceiver design and hardware prototype. To address the problem of strong direct-link interference from ambient transmitter in standard AmBC systems, some studies shift the backscattered signal to a clean band that does not overlap with the direct-link signal~\cite{FSBackscatterSigcomm16, WangSmithFMBackscatter17}. In \cite{YangLiangZhangPeiTCOM17}, the BD waveform and backscatter receiver detector are jointly designed to cancel out the direct-link interference from the ambient orthogonal frequency division multiplexing (OFDM) signals. A full-duplex AmBC system is proposed in \cite{BackFiSigcom15}, in which the WiFi access point decodes the received backscattered signal while simultaneously transmitting WiFi packages to its legacy client.

Recently, there are a few literature on performance analysis and optimization for AmBC systems. In \cite{KangLiangICC17}, the ergodic capacity for backscatter communication is maximized for a standard AmBC system. In \cite{LongYangGC17}, the transmit beamforming is optimized for a cooperative AmBC system with multiple antennas at the ambient transmitter. In~\cite{HoangNiyatoAmBCTCOM17}, for an ambient-backscatter assisted cognitive radio network, the secondary transmitter's rate is maximized by optimizing the time resource.

In this paper, we consider a full-duplex AmBC network (ABCN) over ambient OFDM carriers as shown in Fig.~\ref{fig:Fig1}, consisting of a full-duplex access point (FAP), a legacy user (LU), and multiple BDs. We optimize the throughput performance for such an ABCN, which has not been studied in the literature to our best knowledge. The main contributions of this paper are summarized as follows:
\begin{itemize}
  \item First, by employing an FAP, we propose a new model to enable simultaneous downlink information transmission (energy transfer) to the LU (multiple BDs) and uplink information transmission from multiple BDs. We characterize the corresponding throughput and energy transfer performances of the BDs, as well as the throughput performance of the LU.
\item Second, to ensure fairness, we formulate a problem to maximize the minimum throughput among all BDs by jointly optimizing three blocks of variables including the BDs' backscatter time portions, the BDs' power reflection coefficients, and the FAP's subcarrier power allocation. Through joint optimization, the system performance can benefit from multiple design dimensions.
    %
  \item Third, to solve the formulated non-convex problem, we propose an iterative algorithm by leveraging the block coordinated decent (BCD) and successive convex optimization (SCO) techniques, in each iteration of which the three blocks of variables are alternately optimized. Also, we show the convergence of the proposed algorithm and analyze its complexity.
  \item Finally, numerical results show that significant throughput gains are achieved by our proposed joint design, as compared to the benchmark scheme with equal resource allocation. Also, the BD-LU throughput tradeoff and the BDs' throughput-energy tradeoff are revealed.
\end{itemize}

The rest of this paper is organized as follows. Section~\ref{systemmodel} presents the system model for a full-duplex ABCN over ambient OFDM carriers. Section~\ref{formulation} formulates the minimum-throughput maximization problem. Section~\ref{solution} proposes an efficient iterative algorithm by applying the BCD and SCO techniques. Section~\ref{simulation} presents the numerical results. Finally, Section~\ref{conslusion} concludes this paper.

\vspace{-0.2cm}
\section{System Model}\label{systemmodel}
In this section, we present the system model for a full-duplex ABCN. As illustrated in Fig.~\ref{fig:Fig1}, we consider two co-existing communication systems: the legacy communication system which consists of an FAP with two antennas for simultaneous information transmission and reception, respectively, and its dedicated LU\footnote{We consider the case of a single LU, since the FAP typically transmits to an LU in a short period for practical OFDM systems like WiFi system. The analyses and results can be extended to the multiple-LU case.}, and the AmBC system which consists of the FAP and $M$ BDs. The FAP transmits OFDM signals to the LU. We are interested in the AmBC system in which each BD transmits its modulated signal back to the FAP over its received ambient OFDM carrier. Each BD contains a backscatter antenna, a switched load impedance, a micro-controller, a signal processor, an energy harvester, and other modules (e.g., battery, memory, sensing). The BD modulates its received ambient OFDM carrier by intentionally switching the load impedance to vary the amplitude and/or phase of its backscattered signal, and the backscattered signal is received and finally decoded by the FAP. The energy harvester collects energy from ambient OFDM signals and uses it to replenish the battery which provides power for all modules of the BD.
\begin{figure}[!t]
\vspace{-0.2cm}
\centering
\includegraphics[width=.98\columnwidth] {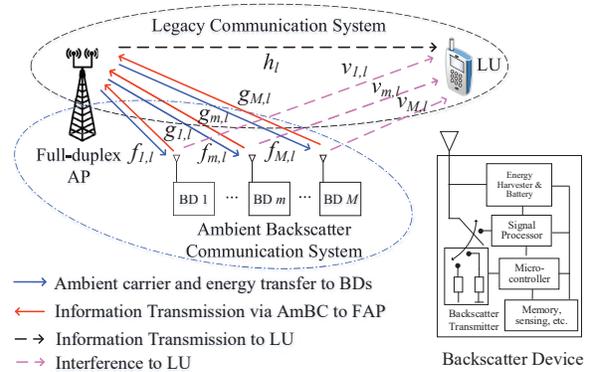}
\caption{System description for a full-duplex ABCN.}
\label{fig:Fig1}
\vspace{-0.2cm}
\end{figure}

The block fading channel model is assumed. As shown in Fig.~\ref{fig:Fig1}, let $f_{m, l}$ be the $L_{\sf f}$-path forward channel response from the FAP to the $m$-th BD, for $m=1,\ \ldots, \ M$, $g_{m, l}$ be the $L_{\sf g}$-path backward channel response from the $m$-th BD to the FAP, $h_{l}$ be the $L_{\sf h}$-path direct-link channel response from the FAP to the LU, and $v_{m, l}$ be the $L_{\sf v}$-path interference channel response from the $m$-BD to the LU. Let $N$ be the number of subcarriers of the transmitted OFDM signals. For each channel, define the frequency response at the $k$-th subcarrier as $F_{m,k} = \sum_{l=0}^{L_{\sf f}-1} f_{m,l}e^{\frac{-j 2\pi k l}{N}}$, $G_{m,k} = \sum_{l=0}^{L_{\sf g}-1} g_{m,l}e^{\frac{-j 2\pi k l}{N}}$, $V_{m,k} = \sum_{l=0}^{L_{\sf v}-1} v_{m,l}e^{\frac{- j 2 \pi k l }{N}}$, $H_{k} = \sum_{l=0}^{L_{\sf h}-1} h_{l}e^{\frac{- j 2 \pi k l }{N}}$, for $k = 0,\ldots ,N-1$.

\begin{figure}[!t]
\vspace{-0.1cm}
\centering
\includegraphics[width=.98\columnwidth] {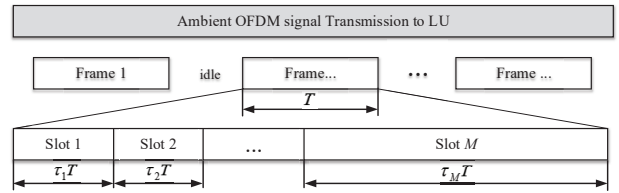}
\caption{Frame structure for TDMA-based ABCN.}
\label{fig:Fig2}
\vspace{-0.7cm}
\end{figure}


We consider frame-based transmission, and the frame structure is shown in Fig.~\ref{fig:Fig2}. In each frame of time duration $T$ (seconds) consisting of $M$ slots, the FAP simultaneously transmits downlink OFDM signals to the LU, and receives uplink signals backscattered from all BDs in a time-division-multiple-access (TDMA) manner. The $m$-th slot of time duration $\tau_m T$ (with time portion $\tau_m$ ($0 \leq \tau_m \leq 1$)) is assigned to the $m$-th BD. Denote the backscatter time portion vector $\btau=[\tau_1 \ \tau_2 \ \ldots \ \tau_M]^T$. In the $m$-th slot, BD $m$ reflects back a portion of its incident signal for transmitting information to the FAP and harvests energy from the remaining incident signal, and all other BDs only harvest energy from their received OFDM signals.


Let $S_{m,k}(n) \in \calC$ be the FAP's information symbol at the $k$-th subcarrier, $\forall k$, in the $n$-th OFDM symbol period of the $m$-th slot. After inverse discrete Fourier transform at the FAP, a cyclic-prefix (CP) of length $N_{\sf cp}$ is added at the beginning of each OFDM symbol. The transmitted time-domain signal in each OFDM symbol period is 
\begin{align}
  s_{m, t} (n) \!=\! \frac{1}{N}\sum_{k=0}^{N-1} \sqrt{P_{m,k}} S_{m, k} (n) e^{j 2\pi \frac{kt}{N}},
\end{align}
for the time index $t =0, 1, \ldots, N -1$, where $P_{m,k}$ is the allocated power at the $k$-th subcarrier in the $m$-th slot. The subcarrier power values are subject to the average power constraint $\sum \nolimits_{m=1}^M \tau_m \sum \nolimits_{k=0}^{N-1} P_{m,k} \leq \barP$, where  $\barP$ is the total transmission power in all slots. Denote the subcarrier power allocation matrix $\bP=[\bp_1 \ \bp_2 \ \ldots \ \bp_M]$, where $\bp_m$ is the subcarrier power allocation vector in the $m$-th slot.

In the $m$-th slot, the incident signal at BD $m$ is $s_{m, t} (n) \otimes f_{m,l}$, where $\otimes$ means the convolution operation. Let $\alpha_m$ ($0 \leq \alpha_m \leq 1$) be the $m$-th BD's power reflection coefficient, and denote the vector $\balpha=[\alpha_1 \ \alpha_2 \ldots \alpha_M]^T$. Let $\eta$ ($0 \leq \eta \leq 1$) be the energy-harvesting efficiency for all BDs. From the aforementioned system model and~\cite{ZhouZhangWirelessCom14}, the total harvested energy by BD $m$ in all slots is thus
\vspace{-0.2cm}
\begin{align}
  &E_m (\btau, \alpha_m, \bP) =  \nonumber \\
  &\eta\sum_{k=0}^{N-1}  |F_{m,k}|^{2} \big[ \tau_{m}  P_{m,k} (1-\alpha_m) + \sum_{r=1, \neq m}^M  \tau_{r} P_{r, k} \big].
\end{align}
\vspace{-0.3cm}

Let $X_m(n)$ be the $m$-th BD's information symbol, whose duration is designed to be the same as the OFDM symbol period. We assume each BD can align the transmission of its own symbol $X_m(n)$ with its received OFDM symbol\footnote{BD can practically estimate the arrival time of OFDM signal by some methods like the scheme that utilizes the repeating structure of CP~\cite{YangLiangZhangPeiTCOM17}.}. In the $m$-th slot, the backscattered signal from the $m$-th BD is thus $\sqrt{P_{m,k}} \sqrt{\alpha_m} s_{m, t} (n) \otimes f_{m,l} X_m(n)$.

The transmitted downlink signal $s_{m,t}$ is known by the FAP's receiving chain. Thus, this signal can be reconstructed and subtracted from the received signals. Therefore, the self-interference can be cancelled by using existing digital or analog cancellation techniques. For this reason, we assume perfect self-interference cancellation (SIC) at the FAP in this paper. After performing SIC, the received signal backscattered from the $m$-th BD is given by
\vspace{-0.1cm}
\begin{align}
  &y_{m, t} (n)=\\
&\quad \sqrt{P_{m,k}} \sqrt{\alpha_m} s_{m, t} (n) \otimes f_{m,l} \otimes g_{m,l} X_m(n) + w_{m, t} (n), \nonumber
\end{align}
\vspace{-0.1cm}where the additive white Gaussian noise (AWGN) is assumed, i.e., $w_{m,t}(n) \sim \calC \calN(0, \sigma^2)$.

After CP removal and discrete Fourier transform operation at the FAP, the received frequency-domain signal is 
\vspace{-0.4cm}
\begin{align}\label{eq:FD_RX-AP}
  &Y_{m, k} (n) = \\
  &\quad \sqrt{P_{m,k}} \sqrt{\alpha_m} F_{m, k} G_{m,k} S_{m, k} (n) X_m (n) + W_{m, k} (n), \nonumber
\end{align}
where the frequency-domain noise $W_{m,k} (n) \sim \calC \calN(0, \sigma^2)$.

The FAP performs maximum-ratio-combining (MRC) to recover the BD symbol $X_m(n)$ as follows
\begin{align}
  \hatX_m (n) =\frac{1}{N} \sum \limits_{k=0}^{N-1} \frac{Y_{m, k} (n)}{\sqrt{P_{m,k}} \sqrt{\alpha_m} F_{m,k} G_{m,k}S_{m,k}},
\end{align}
\vspace{-0.1cm}
and the resulting decoding signal-to-noise-ratio (SNR) is 
\begin{align}
\gamma_m (\alpha_m, \bP) = \frac{\alpha_m}{\sigma ^{2}} \sum \limits_{k=0}^{N-1} |F_{m,k}|^{2}|G_{m,k}|^{2}P_{m,k}.
\end{align}
\vspace{-0.1cm}

Hence, the $m$-th BD's throughput\footnote{This paper adopts normalized throughput with unit of bps/Hz.} normalized to $T$ is 
\begin{align}\label{eq:RateBDm}
  &R_m (\tau_m, \alpha_m, \bp_m)= \nonumber \\
  & \frac{\tau_{m}}{N} \log \left(1+ \frac{\alpha_m}{\sigma^{2}} \sum_{k=0}^{N-1} |F_{m,k}|^{2}|G_{m,k}|^{2}P_{m,k} \right).
\end{align}

Similar to \eqref{eq:FD_RX-AP}, the received frequency-domain signal at the LU can be written as follows
\begin{align}\label{eq:FD_RX-LU}
 & Z_{m, k} (n) = \sqrt{P_{m,k}} H_{k} S_{m, k} (n)  +... \\
 &\sqrt{P_{m,k}} \sqrt{\alpha_m} F_{m, k} V_{m,k} S_{m, k} (n) X_m (n) + \tilW_{m, k} (n), \forall k, m \nonumber
\end{align}
where the frequency-domain noise $\tilW_{m,k} (n) \sim \calC \calN(0, \sigma^2)$.

Similar to~\eqref{eq:RateBDm}, treating backscatter-link signal as interference, the total throughput of the LU is given by
\begin{align}\label{eq:RateLU}
  &\tilR(\btau, \balpha, \bP) = \\
  &\frac{1}{N}\sum_{m=1}^M \tau_m \sum_{k=0}^{N-1}  \log \left(1+\frac{|H_{k}|^2 P_{m,k} }{\alpha_m  |F_{m,k}V_{m,k}|^2 P_{m,k} + \sigma^2}\right). \nonumber
\end{align}

\section{Problem Formulation}\label{formulation}
In this section, we formulate the optimization problem. The objective is to maximize the minimum throughput $Q(\btau, \balpha, \bP) \triangleq \underset{m=1,\ldots,M} {\min} R_m(\tau_m, \alpha_m, \bp_m)$ among all BDs, by jointly optimizing the BDs' backscatter time portions  (i.e., $\btau$), the BD's power reflection coefficients (i.e., $\balpha$), and the FAP's subcarrier power allocation (i.e., $\bP$). We consider the following constraints: the total throughput of the LU needs to be larger than a given minimum throughput $D$, i.e., $\tilR(\btau, \balpha, \bP) \geq D$; each BD has a minimum energy requirement $E_{\min, m}$, i.e., $E_m (\btau, \alpha_m, \bP) \geq E_{\min, m},\forall m$; the total power consumed by the FAP needs to be less than a given maximum power $\barP$, i.e., $  \sum_{m=1}^M  \sum_{k=0}^{N-1} \tau_{m}P_{m,k}\leq \barP$; the sum of backscatter time portions for all BDs should be no larger than 1, i.e., $\sum_{m=1}^M \tau_{m}\leq 1$, with non-negative time portion $\tau_m$ for each BD $m$; the peak power value for each subcarrier is $P_{\sf peak}$, i.e., $0 \leq P_{m,k} \leq P_{\sf peak}, \ \forall m, k$; the power reflection coefficients are positive numbers and no larger than 1, i.e., $0 \leq \alpha_m \leq 1, \ \forall m$. The optimization problem is thus formulated as follows
\begin{subequations}
\label{eq:P1}
\begin{align}
&\underset{Q, \btau, \balpha, \bP}{\max}  \quad Q \\
& \text{s.t.} \quad \frac{\tau_{m}}{N} \!\log\! \!\left(\!1 \!+\ \! \frac{\alpha_m}{\sigma^{2}} \sum_{k=0}^{N-1} |F_{m,k}|^{2}|G_{m,k}|^{2}P_{m,k} \!\right)\!  \!\geq\! Q,  \forall m \label{eq:C1P1}\\
& \frac{1}{N} \!\sum_{m=1}^M\! \tau_m \sum_{k=0}^{N-1}  \!\log\! \!\left(\!1 \!+\! \frac{|H_{k}|^2 P_{m,k} }{\alpha_m  |F_{m,k}V_{m,k}|^2 P_{m,k} \!+\! \sigma^2}\!\right)\!  \!\geq\! D \label{eq:C2P1} \\
&\eta \!\sum_{k=0}^{N-1}\!  |F_{m,k}|^{2} \!\big[\! \tau_{m}  P_{m,k} ( 1\!-\! \alpha_m ) \!+\! \!\sum_{r=1, \neq m}^M\!  \tau_r P_{r,k} \!\big]\! \nonumber \\
& \quad \quad \quad \quad \quad \!\geq\! E_{\min, m},\forall m \label{eq:C3P1} \\
& \sum_{m=1}^M  \sum_{k=0}^{N-1} \tau_{m}P_{m,k}\leq \barP \label{eq:C4P1}\\
& \sum_{m=1}^M \tau_{m}\leq 1 \label{eq:C5P1} \\
& 0 \leq P_{m,k} \leq  P_{\sf peak}, \quad \forall m, \ k \label{eq:C6P1} \\
&  \tau_{m}\geq  0, \quad \forall m \label{eq:C7P1}\\
&  0 \leq \alpha_m \leq  1, \quad \forall m. \label{eq:C8P1}
\end{align}
\end{subequations}
Notice that problem \eqref{eq:P1} is non-convex and challenging to solve in general, since the variables are all coupled and the constraint function in \eqref{eq:C2P1} is non-convex over $P_{m,k}$'s.


\vspace{-0.2cm}
\section{Proposed Algorithm}\label{solution}
In this section, we propose an efficient iterative algorithm for the problem \eqref{eq:P1} by applying the block coordinate descent (BCD)~\cite{HongLuoBCDSPM17} and successive convex optimization (SCO) techniques~\cite{Beck2010}. Then, we show the convergence of the proposed algorithm and analyze its complexity.
\vspace{-0.2cm}
\subsection{Backscatter Time Allocation Optimization}
In iteration $j$, for given power reflection coefficients $\balpha^{\{j\}}$ and subcarrier power allocation $\bP^{\{j\}}$,  the backscatter time portions $\btau$ can be optimized by solving the problem 
\begin{subequations}
\label{eq:P1BTA}
\begin{align}
&\underset{Q, \btau}{\max}  \quad Q \label{eq:ObjP1BTA} \\
&\text{s.t.} \quad \eqref{eq:C1P1}, \eqref{eq:C2P1},  \eqref{eq:C3P1},  \eqref{eq:C4P1},  \eqref{eq:C5P1},  \eqref{eq:C7P1},
\end{align}
\end{subequations}
where the variables $P_{m,k}$'s and $\alpha_m$'s are replaced by given $P_{m,k}^{\{j\}}$'s  and $\alpha_m^{\{j\}}$'s, respectively, in all the constraints. Notice that problem~\eqref{eq:P1BTA} is a standard linear programming (LP), it can be solved efficiently by existing optimization tools such as CVX~\cite{CVXTool2016}.

\vspace{-0.3cm}
\subsection{Reflection Power Allocation Optimization}
For given backscatter time portions $\btau^{\{j\}}$ and subcarrier power allocation $\bP^{\{j\}}$,  the power reflection coefficients $\balpha$ can be optimized by solving the following problem
\begin{subequations}
\label{eq:P1RPA}
\begin{align}
&\underset{Q, \balpha}{\max}  \quad Q \\
&\text{s.t.} \quad \eqref{eq:C1P1}, \eqref{eq:C2P1},  \eqref{eq:C3P1},  \eqref{eq:C8P1},
\end{align}
\end{subequations}
where the variables $P_{m,k}$'s and $\tau_m$'s are replaced by given $P_{m,k}^{\{j\}}$'s  and $\tau_m^{\{j\}}$'s, respectively. Since the left-hand-side of the constraint~\eqref{eq:C2P1} with given $P_{m,k}^{\{j\}}$  and $\tau_m^{\{j\}}$ is a decreasing and convex function of $\alpha_m$, the constraint is convex.  Hence, problem \eqref{eq:P1RPA} is a convex optimization problem that can also be efficiently solved by CVX~\cite{CVXTool2016}.

\vspace{-0.3cm}
\subsection{Subcarrier Power Allocation Optimization}
For given backscatter time portions $\btau^{\{j\}}$ and power reflection coefficients $\balpha^{j}$,  the subcarrier power allocation $\bP$ can be optimized by solving the following problem
\begin{subequations}
\label{eq:P1TPA}
\begin{align}
&\underset{Q, \bP}{\max}\quad Q \\
&\text{s.t.}\quad \frac{1}{N} \sum_{m=1}^M \tau_m^{\{j\}} \label{eq:C2P1TPA} \\
&\quad \sum_{k=0}^{N-1}  \log \left(1+\frac{|H_{k}|^2 P_{m,k} }{\alpha_m^{\{j\}}  |F_{m,k}V_{m,k}|^2 P_{m,k} + \sigma^2}\right)  \geq D\nonumber \\
&\quad  \eqref{eq:C1P1}, \eqref{eq:C3P1},  \eqref{eq:C4P1},  \eqref{eq:C6P1},
\end{align}
\end{subequations}
where the variables $\tau_m$'s and $\alpha_m$'s are replaced by given values $\tau_m^{\{j\}}$'s  and $\alpha_m^{\{j\}}$'s, respectively. Since the constraint function $\tilR(\bP)|_{\btau^{\{j\}}, \balpha^{\{j\}}}$ in \eqref{eq:C2P1TPA} is non-convex with respect to $P_{m, k}$, problem \eqref{eq:P1TPA} is non-convex. Notice that the constraint function $\tilR(\bP)|_{\btau^{\{j\}}, \balpha^{\{j\}}}$ can be rewritten as 
\begin{align}\label{eq:SCA1}
  &\tilR(\bP)|_{\btau^{\{j\}}, \balpha^{\{j\}}} \nonumber \\
   &= \sum_{m=1}^M \frac{\tau_m^{\{j\}}}{N} \sum_{k=0}^{N-1}  \Big[\!-\!\log \!\left(\! \alpha_m |F_{m,k}V_{m,k}|^2 P_{m,k} + \sigma^2 \!\right)\! +\!...\!  \nonumber \\
  &\quad \log \left( \left(\alpha_m^{\{j\}}  |F_{m,k}V_{m,k}|^2 + |H_{k}|^2 \right) P_{m,k} + \sigma^2 \right) \Big].
\end{align}

To handle the non-convex constraint \eqref{eq:C2P1TPA}, we exploit the SCO technique \cite{Beck2010} to approximate the second logarithm function in \eqref{eq:SCA1}. Recall that any concave function can be globally upper-bounded by its first-order Taylor expansion at any point. Specifically, let $P_{m,k}^{\{j\}}$ denote the subcarrier power allocation in the last iteration. We have the following concave lower bound at the local point $P_{m,k}^{\{j\}}$
\begin{align}
  &\tilR(\bP)|_{\btau^{\{j\}}, \balpha^{\{j\}}, \bP^{\{j\}}}  \label{eq:SCALB} \!\geq\!\\
  \!& \sum_{m=1}^M\! \frac{\tau_m^{\{j\}}}{N} \!\sum_{k=0}^{N-1}\!  \Big[ \!\log\! \left( \!\left(\!\alpha  |F_{m,k}V_{m,k}|^2 \!+\! |H_{k}|^2 \!\right)\! P_{m,k} \!+\! \sigma^2 \right) \!-\! ... \nonumber \\
  &\log \left(\alpha  |F_{m,k}V_{m,k}|^2 P_{m,k}^{\{i, j\}} + \sigma^2\right)  - \nonumber \\
  &\frac{\alpha  |F_{m,k}V_{m,k}|^2 (P_{m,k}\!-\! P_{m,k}^{\{i, j\}})}{\alpha  |F_{m,k}V_{m,k}|^2 P_{m,k}^{\{i, j\}} \!+\! \sigma^2}\Big] \!\triangleq\! \tilR^{\sf lb}(\bP)|_{\btau^{\{j\}}, \balpha^{\{j\}}, \bP^{\{j\}}}.\nonumber
\end{align}

With given local points $\bP^{\{j\}}$ and lower bound $\tilR^{\sf lb}(\bP)|_{\btau^{\{j\}}, \balpha^{\{j\}}, \bP^{\{j\}}}$ in \eqref{eq:SCALB}, by introducing the lower-bound minimum-throughput $Q_{\sf tpa}^{\sf lb}$, problem \eqref{eq:P1TPA} is approximated as the following problem 
\begin{subequations}
\label{eq:P1TPAA}
\begin{align}
&\underset{Q_{\sf tpa}^{\sf lb}, \bP}{\max}  \quad Q_{\sf tpa}^{\sf lb} \\
&\text{s.t.} \quad \tilR^{\sf lb}(\bP)|_{\btau^{\{j\}}, \balpha^{\{j\}}, \bP^{\{j\}}} \geq D \label{eq:C1P1TPAA} \\
&\qquad  \eqref{eq:C1P1}, \eqref{eq:C3P1},  \eqref{eq:C4P1},  \eqref{eq:C6P1},
\end{align}
\end{subequations}
where the variables $\tau_m$'s and $\alpha_m$'s are replaced by given $\tau_m^{\{j\}}$'s  and $\alpha_m^{\{j\}}$'s, respectively. Problem \eqref{eq:P1TPAA} is a convex optimization problem which can also be efficiently solved by CVX~\cite{CVXTool2016}. It is noticed that the lower bound adopted in \eqref{eq:C1P1TPAA} implies that the feasible set of problem \eqref{eq:P1TPAA} is always a subset of that of problem \eqref{eq:P1TPA}. As a result, the optimal objective value obtained from problem \eqref{eq:P1TPAA} is in general a lower bound of that of problem \eqref{eq:P1TPA}.

\subsection{Overall Algorithm}
We propose an overall iterative algorithm for problem \eqref{eq:P1} by applying the BCD technique~\cite{HongLuoBCDSPM17}. Specifically, the entire variables in original problem \eqref{eq:P1} are partitioned into three blocks, i.e., $\btau, \balpha$, and $\bP$, which are alternately optimized by solving problem \eqref{eq:P1BTA}, \eqref{eq:P1RPA}, and \eqref{eq:P1TPAA} correspondingly in each iteration, while keeping the other two blocks of variables fixed. Furthermore, the obtained solution in each iteration is used as the input of the next iteration. The details are summarized in Algorithm \ref{AlgorithmP1}.

\begin{algorithm}[t!]
\caption{Iterative Algorithm for solving problem \eqref{eq:P1}}\label{AlgorithmP1}
\begin{algorithmic}[1]
\STATE Initialize $Q^{\{0\}}, \ Q^{\{1\}}, \ \epsilon, \ \btau ^{\{0\}}, \ \balpha ^{\{0\}}, \bP^{\{0\}}$. Let $j=0$. \\
\WHILE {($|Q^{\{j+1\}}-Q^{\{j\}}|>\epsilon$)}{
\STATE Solve problem \eqref{eq:P1BTA} for given $\balpha^{\{j\}}$ and $\bP^{\{j\}}$, and obtain the optimal solution as $\btau^{\{j+1\}}$.
\STATE Solve problem \eqref{eq:P1RPA} for given $\btau^{\{j+1\}}$ and $\bP^{\{j\}}$, and obtain the optimal solution as $\balpha^{\{j+1\}}$.
\STATE Solve problem \eqref{eq:P1TPAA} for given $\btau^{\{j+1\}}$, $\balpha^{\{j+1\}}$, and $\bP^{\{j\}}$, and obtain the optimal solution as $\bP^{\{j+1\}}$.
}
\STATE  Update iteration index $j=j+1$.
\ENDWHILE
\STATE  Return optimal solution $\btau^{\star \{j\}}$, $\balpha^{\star \{j\}}$, and $\bP^{\star \{j\}}$, and objective value $Q^{\star \{j\}}(\btau^{\star \{j\}}, \balpha^{\star \{j\}},\bP^{\star \{j\}})$.
\end{algorithmic}
\end{algorithm}

\subsection{Convergence and Complexity Analysis}
Notice that in our case, for subcarrier power allocation problem \eqref{eq:P1TPA}, we only solve its approximate problem \eqref{eq:P1TPAA} optimally. Thus, the convergence analysis for the classical BCD technique cannot be directly applied~\cite{HongLuoBCDSPM17}, and the convergence of Algorithm \ref{AlgorithmP1} needs to be proved, as follows.
\begin{mythe}
Algorithm \ref{AlgorithmP1} is guaranteed to converge.
\end{mythe}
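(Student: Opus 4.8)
The plan is to show that the sequence of objective values $\{Q^{\{j\}}\}$ generated by Algorithm~\ref{AlgorithmP1} is monotonically non-decreasing and bounded above, so that it converges by the monotone convergence theorem. The boundedness part is easy: the per-BD throughput $R_m(\tau_m,\alpha_m,\bp_m)$ is bounded above because $\tau_m \le 1$, $\alpha_m \le 1$, and each $P_{m,k} \le P_{\sf peak}$, so $Q = \min_m R_m$ is bounded above by a finite constant depending only on the channel gains, $P_{\sf peak}$, $N$, and $\sigma^2$. The substantive part is monotonicity, and this is where the approximation step for the subcarrier power block requires care.

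First I would fix iteration $j$ and track the objective across the three sub-steps. For the backscatter-time step, $(\btau^{\{j+1\}},\balpha^{\{j\}},\bP^{\{j\}})$ is feasible for problem~\eqref{eq:P1BTA} (since $\btau^{\{j\}}$ together with the fixed blocks is feasible and is a candidate), and solving the LP optimally gives $Q(\btau^{\{j+1\}},\balpha^{\{j\}},\bP^{\{j\}}) \ge Q(\btau^{\{j\}},\balpha^{\{j\}},\bP^{\{j\}})$. The same argument applies verbatim to the reflection-coefficient step~\eqref{eq:P1RPA}, giving a further non-decrease. For the subcarrier-power step, the key observation is that the point $\bP^{\{j\}}$ is feasible for the approximate problem~\eqref{eq:P1TPAA}: by construction the first-order Taylor expansion in~\eqref{eq:SCALB} is tight at the expansion point, so $\tilR^{\sf lb}(\bP^{\{j\}})|_{\btau^{\{j+1\}},\balpha^{\{j+1\}},\bP^{\{j\}}} = \tilR(\bP^{\{j\}})|_{\btau^{\{j+1\}},\balpha^{\{j+1\}}} \ge D$, and the remaining constraints \eqref{eq:C1P1}, \eqref{eq:C3P1}, \eqref{eq:C4P1}, \eqref{eq:C6P1} are unchanged. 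Hence $\bP^{\{j\}}$ is a feasible (not necessarily optimal) candidate for~\eqref{eq:P1TPAA}, so solving~\eqref{eq:P1TPAA} optimally yields $Q_{\sf tpa}^{\sf lb}(\bP^{\{j+1\}}) \ge Q_{\sf tpa}^{\sf lb}(\bP^{\{j\}}) = Q(\btau^{\{j+1\}},\balpha^{\{j+1\}},\bP^{\{j\}})$, where the last equality again uses tightness of the Taylor bound at $\bP^{\{j\}}$.

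Then I would close the loop: since the lower-bound constraint~\eqref{eq:C1P1TPAA} is more restrictive than the true constraint~\eqref{eq:C2P1} (the feasible set of~\eqref{eq:P1TPAA} is contained in that of~\eqref{eq:P1TPA}, as already noted in the text), any $\bP^{\{j+1\}}$ feasible for~\eqref{eq:P1TPAA} is also feasible for the exact subproblem~\eqref{eq:P1TPA}, and on that feasible point the true minimum throughput $Q(\btau^{\{j+1\}},\balpha^{\{j+1\}},\bP^{\{j+1\}})$ equals $Q_{\sf tpa}^{\sf lb}(\bP^{\{j+1\}})$ because constraint~\eqref{eq:C1P1} (the one defining $Q$) is common to both problems and is not the approximated one. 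Chaining the three inequalities gives
\begin{align}
Q^{\{j+1\}} = Q(\btau^{\{j+1\}},\balpha^{\{j+1\}},\bP^{\{j+1\}}) \ge Q(\btau^{\{j\}},\balpha^{\{j\}},\bP^{\{j\}}) = Q^{\{j\}}. \nonumber
\end{align}
Combined with the upper bound, $\{Q^{\{j\}}\}$ is monotone and bounded, hence convergent, and the while-loop terminates once $|Q^{\{j+1\}} - Q^{\{j\}}| \le \epsilon$.

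The main obstacle is the bookkeeping around the approximate subproblem: one must be careful to verify that the objective $Q$ is governed by a constraint that is \emph{not} modified by the SCO step, so that optimizing the surrogate genuinely improves the original objective, and to confirm that the Taylor-expansion point from the previous iteration remains feasible for the new surrogate after $\btau$ and $\balpha$ have been updated — this hinges on the fact that~\eqref{eq:SCALB} is a global lower bound that is tight at the expansion point, regardless of the values of the other (now updated) blocks. A secondary subtlety worth a sentence is that feasibility must be preserved throughout, i.e., the warm-start $\bP^{\{j\}}$ keeps every problem in the chain feasible, which is exactly what the tightness argument delivers.
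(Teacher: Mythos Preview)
Your proposal is correct and follows essentially the same approach as the paper's own proof: monotone non-decrease of $Q^{\{j\}}$ across the three sub-steps (with the SCO step handled via tightness of the Taylor lower bound at the expansion point and the containment of feasible sets), together with boundedness of the objective. Your exposition is in fact slightly more careful than the paper's in spelling out why $\bP^{\{j\}}$ remains feasible for the surrogate problem after $\btau$ and $\balpha$ have been updated, and in giving an explicit upper bound for $Q$.
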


\begin{proof}
First, in step 3 of Algorithm \ref{AlgorithmP1}, since the optimal solution $\btau^{\{j+1\}}$ is obtained for given $\balpha^{\{j\}}$ and $\bP^{\{j\}}$, we have the following inequality on the minimum throughput $Q(\btau, \balpha, \bP)$
\begin{align}
  Q(\btau^{\{j\}}, \balpha^{\{j\}}, \bP^{\{j\}}) \leq Q(\btau^{\{j+1\}}, \balpha^{\{j\}}, \bP^{\{j\}}). \label{eq:Qinequality1}
\end{align}

Second, in step 4, since the optimal solution $\balpha^{\{j+1\}}$ is obtained for given $\btau^{\{j+1\}}$ and $\bP^{\{j\}}$, it holds that
\begin{align}
  Q\!(\!\btau^{\{j+1\}}, \balpha^{\{j\}}, \bP^{\{j\}} \!)\! \leq Q\!(\!\btau^{\{j+1\}}, \balpha^{\{j+1\}}, \bP^{\{j\}}\!)\!.\label{eq:Qinequality2}
\end{align}


Third, in step 5, it follows that
\begin{align}
  Q\!(\!\btau^{\{j+1\}}\!, \! \balpha^{\{j+1\}}\!, \! \bP^{\{j\}}\!)\! &\!\eqa\! Q^{\sf {lb}, \{j\}}_{\sf tpa} \!(\!\btau^{\{j+1\}}, \balpha^{\{j+1\}}, \bP^{\{j\}}\!)\! \nonumber \\
  &\!\leb\!  Q^{\sf {lb}, \{j\}}_{\sf tpa} \!(\!\btau^{\{j+1\}}, \balpha^{\{j+1\}}, \bP^{\{j+1\}}\!)\! \nonumber \\
  &\!\lec\! Q\!(\!\btau^{\{j+1\}}\!, \balpha^{\{j+1\}}\!, \bP^{\{j+1\}}\!)\!, \label{eq:Qinequality3}
\end{align}
where (a) holds since the Taylor expansion in \eqref{eq:SCALB} is tight at given local point, which implies problem \eqref{eq:P1TPAA} at $\bP^{\{j\}}$ has the same objective function as that of problem \eqref{eq:P1TPA}; (b) is because $\bP^{\{j+1\}}$ is the optimal solution to problem \eqref{eq:P1TPAA}; (c) holds since the objective value of problem \eqref{eq:P1TPAA} is a lower bound of that of its original problem \eqref{eq:P1TPA}.

From \eqref{eq:Qinequality1}, \eqref{eq:Qinequality2}, and \eqref{eq:Qinequality3}, we have
\begin{align}
   Q(\!\btau^{\{j\}}, \balpha^{\{j\}}, \bP^{\{j\}})\! \leq Q(\!\btau^{\{j+1\}}, \balpha^{\{j+1\}}, \bP^{\{j+1\}})\!,
\end{align}
which implies that the objective value of problem \eqref{eq:P1} is non-decreasing after each iteration in Algorithm \ref{AlgorithmP1}. Since the objective value of problem \eqref{eq:P1} is a finite positive value, the proposed Algorithm \ref{AlgorithmP1} is guaranteed to converge to the optimal objective value and solutions. This completes the convergence proof.
\end{proof}

The complexity of Algorithm \ref{AlgorithmP1} is polynomial, since only three convex optimization problems need to be solved in each iteration. Hence, the proposed Algorithm \ref{AlgorithmP1} can be practically implemented with fast convergence for full-duplex ABCNs with a moderate number of BDs and LU(s).

\vspace{-0.2cm}
\section{NUMERICAL RESULTS}\label{simulation}
In this section, we provide simulation results to evaluate the performance of the proposed joint design. We consider an ABCN with $M=2$ BDs. Suppose that the FAP-to-BD1 distance and FAP-to-BD2 distance are 2.5 m and 4 m, respectively, the FAP (BD1, BD2)-to-LU distances are all 15 m. We assume independent Rayleigh fading channels, i.e., the channel coefficient of each path is a circularly symmetric complex Gaussian prandom variable, and the power gains of multiple paths are exponentially distributed. For each channel link, its first-path channel power gain is assumed to be $10^{-3} d^{-2}$, where $d$ is the distance with unit of meter. Let the number of pathes $L_{\sf f}=L_{\sf g}=4$, $L_{\sf h}=8$, and $L_{\sf v}=6$. Other parameters are set as $N=64, N_{\sf cp}=16, \barP=1, \eta=0.5, \epsilon=10^{-4}$. Define the average receive SNR at the FAP as $\bar{\gamma}=\barP \sum \nolimits_{l=0}^{L_{\sf f}-1} \bbE[\left| {{g_{1,l}f_{1, l}}} \right|^2]/\sigma^2$. Let $E_{\min, 1}=E_{\min, 2}=E_{\min}$. For performance comparison, we consider a benchmark scheme in which the backscatter time portion and subcarrier power are equally allocated, i.e., $\tau_m=\frac{1}{M}, P_{m,k}=P_{\sf ave}=\frac{1}{MN}$, and all BDs adopt a common power reflection coefficient that is optimized via CVX. The following results are obtainepd based on 100 random channel realizations.

Fig.~\ref{fig:FigSim1} plots the max-min throughput of all BDs versus the LU's throughput requirement $D$ for different SNRs $\bar{\gamma}$'s. We fix $P_{\sf peak} = 20 P_{\sf ave}$ and $E_{\min}=10 \  \mu$J. As expected, the max-min throughput decreases as $D$ increases, which reveals the throughput tradeoff between the BDs and the LU. We further observe that the max-min throughput performance is significantly enhanced by using the proposed joint design, compared to the benchmark scheme. Also, higher max-min throughput is achieved when the SNR at the FAP is higher.
\begin{figure} [!t]
\vspace{-0.2cm}
	\centering	\includegraphics[width=.98\columnwidth]{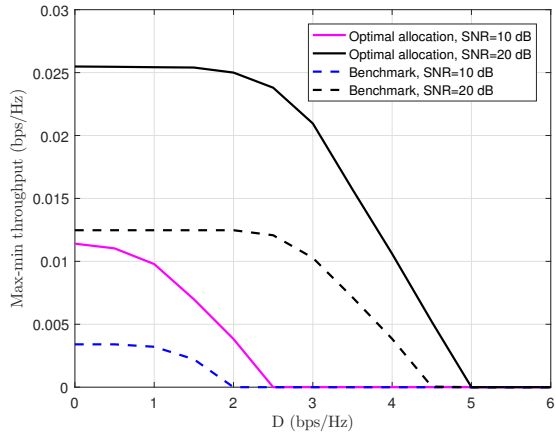}
	\caption{Max-min throughput versus LU throughput requirement for different SNRs.} \label{fig:FigSim1}
\vspace{-0.6cm}
\end{figure}


Fig.~\ref{fig:FigSim2} compares the max-min throughput under different BDs' energy requirements $E_{\min}$'s and subcarrier peak-power values $P_{\sf peak}$'s, for both the proposed joint design and the benchmark scheme. We fix $D=1$ bps/Hz. In general, the max-min throughput increases as the SNR $\bar{\gamma}$ increases. We have three further observations. First and foremost, the proposed joint design achieves significant max-min throughput gains as compared to the benchmark scheme. Second, higher max-min throughput is achieved for lower harvested-energy requirement $E_{\min}$ with given $P_{\sf peak}$, which reveals the BDs' throughput-energy tradeoff. This observation can be specifically obtained from the three red solid curves for our proposed joint design and the three blue dotted curves for the benchmark scheme, given $P_{\sf peak} = 20 P_{\sf ave}$. Third, higher max-min throughput is obtained for higher peak-power value $P_{\sf peak}$ with given $E_{\min}$, which is demonstrated in the red and black solid curves with triangle marker for our proposed joint design.

\begin{figure} [h!]
\vspace{-0.2cm}
	\centering	\includegraphics[width=.98\columnwidth]{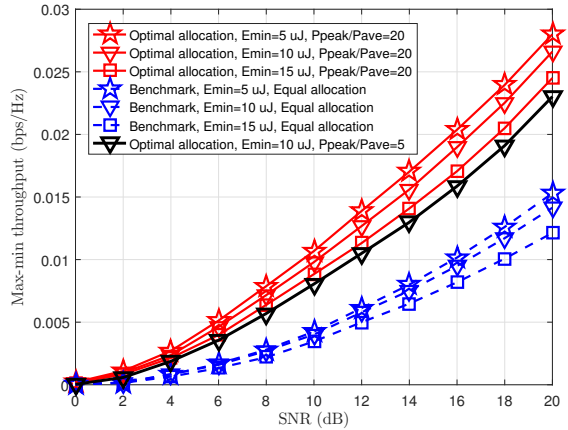}
		\caption{Max-min throughput versus SNR for different harvested-energy requirements and peak power values.} \label{fig:FigSim2}
\vspace{-0.5cm}
\end{figure}

\vspace{-0.2cm}
\section{Conclusion}\label{conslusion}
This paper has investigated a full-duplex AmBC network over ambient OFDM carriers. The minimum throughput among all BDs is maximized by jointly optimizing the BDs' backscatter time portions, the BDs' power reflection coefficients, and the FAP's subcarrier power allocation. By utilizing the block coordinated decent and successive convex optimization techniques, an efficient iterative algorithm is proposed, which is guaranteed to converge. Numerical results show that significant throughput gains are achieved as compared to the benchmark scheme with equal resource allocation, benefitting from multiple design dimensions of the proposed joint optimization. The interesting BDs' throughput-energy tradeoff and the throughput tradeoff between the BDs and the LU are also revealed.

 \renewcommand{\baselinestretch}{0.87}
\bibliography{IEEEabrv,reference1804}

\begin{thebibliography}{10}
\providecommand{\url}[1]{#1}
\csname url@samestyle\endcsname
\providecommand{\newblock}{\relax}
\providecommand{\bibinfo}[2]{#2}
\providecommand{\BIBentrySTDinterwordspacing}{\spaceskip=0pt\relax}
\providecommand{\BIBentryALTinterwordstretchfactor}{4}
\providecommand{\BIBentryALTinterwordspacing}{\spaceskip=\fontdimen2\font plus
\BIBentryALTinterwordstretchfactor\fontdimen3\font minus
  \fontdimen4\font\relax}
\providecommand{\BIBforeignlanguage}[2]{{%
\expandafter\ifx\csname l@#1\endcsname\relax
\typeout{** WARNING: IEEEtran.bst: No hyphenation pattern has been}%
\typeout{** loaded for the language `#1'. Using the pattern for}%
\typeout{** the default language instead.}%
\else
\language=\csname l@#1\endcsname
\fi
#2}}
\providecommand{\BIBdecl}{\relax}
\BIBdecl

\bibitem{ABCSigcom13}
V.~Liu, A.~Parks, V.~Talla, S.~Gollakota, D.~Wetherall, and J.~R. Smith,
  ``Ambient backscatter: wireless communication out of thin air,'' in
  \emph{Proc. of ACM SIGCOMM}, Hong Kong, China, Jun. 2013, pp. 1--13.

\bibitem{WiFiBackscatter14}
B.~Kellogg, A.~Parks, S.~Gollakota, J.~R. Smith, and D.~Wetherall, ``{Wi-Fi}
  backscatter: Internet connectivity for {RF}-powered devices,'' in \emph{Proc.
  of ACM SIGCOMM}, Chicago, USA, Jun. 2014, pp. 1--12.

\bibitem{YangLiangZhangPeiTCOM17}
G.~Yang, Y.-C. Liang, R.~Zhang, and Y.~Pei, ``Modulation in the air:
  Backscatter communication over ambient {OFDM} carrier,'' \emph{{IEEE} Trans.
  Commun.}, vol.~66, no.~3, pp. 1219--1233, Mar. 2018.

\bibitem{WangGaoAmBCTCOM16}
G.~Wang, F.~Gao, R.~Fan, and C.~Tellambura, ``Ambient backscatter communication
  systems{:} detection and performance analysis,'' \emph{{IEEE} Trans.
  Commun.}, vol.~64, no.~11, pp. 4836--4846, Nov. 2016.

\bibitem{FSBackscatterSigcomm16}
P.~Zhang, M.~Rostami, P.~Hu, and D.~Ganesan, ``Enabling practical backscatter
  communication for on-body sensors,'' in \emph{Proc. of ACM SIGCOMM},
  Florianopolis, Brazil, Aug. 2016, pp. 370--383.

\bibitem{WangSmithFMBackscatter17}
A.~Wang, V.~Iyer, V.~Talla, J.~R. Smith, and S.~Gollakota, ``{FM} backscatter:
  Enabling connected cities and smart fabrics,'' in \emph{Proc. of {USENIX}
  Symposium on Networked Systems Design and Implementation (NSDI)}, Boston, MA,
  USA, Mar. 2017, pp. 1--16.

\bibitem{YangLiangZhangIoTJ18}
G.~Yang, Q.~Zhang, and Y.-C. Liang, ``Cooperative ambient backscatter
  communications for green internet-of-things,'' \emph{{IEEE}
  Internet-of-Things Journal}, vol.~5, no.~2, pp. 1116--1130, Apr. 2018.

\bibitem{LongYangGC17}
R.~Long, G.~Yang, Y.~Pei, and R.~Zhang, ``Transmit beamforming for cooperative
  ambient backscatter communication systems,'' in \emph{Proc. of {IEEE}
  GLOBECOM}, Singapore, December 2017, pp. 1--6.

\bibitem{BackFiSigcom15}
D.~Bharadia, K.~Joshi, M.~Kotaru, and S.~Katti, ``{BackFi}: High throughput
  {WiFi} backscatter,'' in \emph{Proc. of ACM SIGCOMM}, London, UK, Aug. 2015,
  pp. 283--296.

\bibitem{KangLiangICC17}
X.~Kang, Y.-C. Liang, and J.~Yang, ``Riding on the primary: A new spectrum
  sharing paradigm for wireless-powered {IoT} devices,'' in \emph{Proc. of
  {IEEE} Conf. Commun. (ICC)}, Paris, France, May 2017, pp. 1--6.

\bibitem{HoangNiyatoAmBCTCOM17}
D.~T. Hoang, D.~Niyato, P.~Wang, D.~I. Kim, and Z.~Han, ``Ambient backscatter:
  A new approach to improve network performance for rf-powered cognitive radio
  networks,'' \emph{{IEEE} Trans. Commun.}, vol.~65, no.~9, pp. 3659--3674,
  Sep. 2017.

\bibitem{ZhouZhangWirelessCom14}
X.~Zhou, R.~Zhang, and C.~K. Ho, ``Wireless information and power transfer in
  multiuser {OFDM} systems,'' \emph{{IEEE} Wireless Commun.}, vol.~13, no.~4,
  pp. 2282--2294, Apr. 2014.

\bibitem{HongLuoBCDSPM17}
M.~Hong, M.~Razaviyayn, Z.-Q. Luo, and J.-S. Pang, ``A unified algorithmic
  framework for block-structured optimization involving big data: With
  applications in machine learning and signal processing,'' \emph{{IEEE} Signal
  Process. Mag.}, vol.~33, no.~1, pp. 57--77, Jan. 2016.

\bibitem{Beck2010}
A.~Beck, A.~Ben-Tal, and L.~Tetruashvili, ``A sequential parametric convex
  approximation method with applications to nonconvex truss topology design
  problems,'' \emph{Journal of Global Optimization}, vol.~47, no.~1, pp.
  29--51, May 2010.

\bibitem{CVXTool2016}
M.~Grant and S.~Boyd, ``{CVX}: Matlab software for disciplined convex
  programming,'' [Online]. Available: \url{http://cvxr.com/cvx}.

\end{thebibliography}
\bibliographystyle{IEEEtran}

\end{document}